\documentclass[12pt,a4paper]{article}

\usepackage[utf8]{inputenc}
\usepackage[T1]{fontenc}
\usepackage{lmodern}

\usepackage{amsmath,amssymb,amsthm,amsfonts}
\usepackage{mathtools}
\usepackage{bm}

\usepackage{geometry}
\usepackage{graphicx}
\usepackage{float}
\usepackage{caption}
\usepackage{subcaption}

\usepackage{booktabs}
\usepackage{multirow}
\usepackage{array}

\usepackage{tikz}
\usetikzlibrary{arrows.meta,positioning,calc}

\usepackage{hyperref}
\hypersetup{
    colorlinks=true,
    linkcolor=blue,
    citecolor=blue,
    urlcolor=blue
}

\usepackage[numbers]{natbib}

\usepackage{setspace}
\newtheorem{theorem}{Theorem}[section]
\newtheorem{proposition}[theorem]{Proposition}
\newtheorem{lemma}[theorem]{Lemma}
\newtheorem{corollary}[theorem]{Corollary}

\theoremstyle{definition}

\newtheorem{remark}[theorem]{Remark}

\title{\textbf{
Strategic Inertia and Institutional Change:\\
A Behavioral Model of Price Reforms versus Action Deletion
}}

\author{
Madjid Eshaghi Gordji\thanks{Corresponding author: \texttt{meshaghi@semnan.ac.ir}}
\textsuperscript{1}
\and
Mohammadali Berahman\textsuperscript{1}
\and
Hasti Eshaghi\textsuperscript{2}
}

\date{}

\begin{document}

\maketitle

\begin{center}
\textsuperscript{1}
Faculty of Mathematics, Statistics and Computer Science, 
Semnan University, Semnan, Iran

\vspace{0.3cm}

\texttt{meshaghi@semnan.ac.ir}

\texttt{mohamadali\_berahman@semnan.ac.ir}

\vspace{0.5cm}

\textsuperscript{2}
Department of Mathematics, University of Tehran, Tehran, Iran

\vspace{0.3cm}

\texttt{hastieshaghi@gmail.com}
\end{center}

\vspace{1cm}

\begin{abstract}
Why do inefficient practices, technologies, or institutions persist even when superior alternatives are available? This paper introduces a quantal response equilibrium with status-quo bias (QRE-SB) in which each player incurs a fixed switching cost when deviating from an inherited default action. In a binary coordination game, we compare two policy interventions: a tax on the default action (price-only reform) versus deleting the default action entirely (ban). We prove that there exists a threshold tax below which the status quo persists and above which a transition occurs; notably, this threshold does not depend on the degree of bounded rationality. Deleting the default action always forces play to the superior equilibrium, irrespective of switching costs or rationality. Moreover, when the superior equilibrium is Pareto-dominant, deletion yields strictly higher expected welfare than any finite tax that leaves the old action feasible. Numerical simulations illustrate the theoretical predictions. The framework provides a formal foundation for the policy principle that sometimes you must ban, not just tax, with direct applications to climate policy, social media regulation, and international sanctions.
\end{abstract}

\noindent\textbf{Keywords:}
Bounded rationality, quantal response, switching costs, status quo bias, institutional change, action deletion.

\vspace{0.3cm}

\noindent\textbf{AMS Classification:}
91A26, 91A40, 91B06, 91B18.


\section{Introduction}

Why do inefficient practices, technologies, or institutions persist even when superior alternatives are available? Economists have long recognized that coordination failures and path dependence can lock in suboptimal outcomes \citep{david1985,arthur1989,north1990}. More recently, behavioral economists have documented that individuals exhibit a systematic status quo bias: they disproportionately stick with inherited choices, even when switching would improve their welfare \citep{samuelson1988status,kahneman1991}. Meanwhile, the literature on bounded rationality \citep{simon1955,selten1998} emphasizes that real decision makers do not fully optimize; instead, they respond noisily to incentives, a property captured elegantly by the quantal response equilibrium (QRE) of \citet{mckelvey1995quantal}. Despite these advances, a unified framework that combines (i) noisy, boundedly rational play, (ii) a microfounded status-quo bias, and (iii) the possibility of changing not only payoffs but also the set of feasible actions has been lacking. This paper fills that gap by introducing a quantal response equilibrium with status-quo bias (QRE-SB) in which each player incurs a fixed switching cost whenever they deviate from an inherited default action. We then ask: can a policymaker dislodge an inefficient status quo by imposing a tax on the default action (a price-only reform), or must they resort to deleting the default action altogether (a ban)?
This paper develops a behavioral framework for comparing two classes of
institutional interventions:
\begin{itemize}
    \item price-based reforms, which modify incentives while preserving
    the action space;
    \item action-deletion interventions, which remove inefficient or
    undesirable strategic options.
\end{itemize}

Our analysis shows that structural interventions can fundamentally alter
the equilibrium landscape in ways that purely monetary incentives cannot.
Our analysis focuses on a simple but economically important class of binary coordination games \citep{cooper1999,young1998}. Such games capture situations ranging from technology adoption \citep{katz1986} to climate cooperation \citep{nordhaus2015} and social norms \citep{bicchieri2006}. Within this class, we derive three main results.

First, for any positive switching cost and any level of rationality (logit precision), there exists a threshold tax below which the status quo persists with high probability and above which a transition to the superior equilibrium occurs. Notably, this threshold does not depend on the degree of rationality when defined as the point where the two actions are equally likely. Second, deleting the default action always forces play to the only remaining feasible equilibrium, irrespective of switching costs or rationality. Third, when the superior equilibrium is Pareto-dominant, deletion yields strictly higher expected welfare than any price-only reform that keeps the old action feasible.
Our contribution builds on several strands of literature. The QRE framework has been applied to many settings \citep{goeree2016}, but only rarely augmented with state-dependent costs. Exceptions include work on switching costs in discrete choice \citep{dubin1984} and on inertia in repeated games \citep{fudenberg2006}. However, the specific comparison between taxes and bans in a coordination environment with endogenously evolving beliefs has not been formalized. The insight that changing the choice set may be more effective than changing incentives is not new in policy debates \citep{thaler2008,sunstein2019}, but our model provides a rigorous game-theoretic foundation for that intuition.

The remainder of the paper is organized as follows. Section 2 presents the baseline model of QRE-SB in a symmetric 2×2 coordination game. Section 3 introduces the two policy interventions -- a tax on the default action and a deletion of that action -- and analyzes their effects. Section 4 states and proves the main results, including the tax threshold, the effect of deletion, and the welfare comparison. Section 5 presents numerical simulations to illustrate the theoretical predictions. Section 6 discusses applications to climate policy, social media regulation, and international sanctions, and connects the model to the broader literature on bounded rationality and institutional change. Section 7 concludes and suggests directions for future research. The key mechanism is that taxes shift payoff differences smoothly, whereas deletion introduces a discrete change in the feasible set, eliminating inertia entirely.

\section{Baseline Model: Quantal Response with Status-Quo Bias}

We consider a symmetric two-player normal-form game. The set of players is \(N=\{1,2\}\). Each player has two pure actions: \(A_i = \{X, Y\}\). The payoff matrix is given by
\[
\begin{array}{c|cc}
 & X & Y \\ \hline
X & a,\;a & c,\;d \\
Y & d,\;c & b,\;b
\end{array}
\]
with the following standard assumptions: \(a > c\) and \(b > d\) (each action is a best response to itself), and \(b > a\) so that the \((Y,Y)\) equilibrium is Pareto-superior. The game exhibits strategic complementarities, a property that holds automatically in the \(2\times2\) case.

The status quo is the inherited action profile. We assume that initially both players have played \(X\) and continue to regard \(X\) as the default. This default matters because we introduce a switching cost: each player incurs a fixed penalty \(\kappa \ge 0\) whenever they choose an action different from the default \(X\). This cost captures cognitive dissonance, transaction costs, psychological attachment, or institutional friction \citep{samuelson1988status,kahneman1991}. Throughout, \(p\) denotes the probability of playing the status quo action \(X\); thus lower values of \(p\) correspond to greater adoption of the alternative \(Y\).

Following \citet{mckelvey1995quantal}, players do not necessarily best respond perfectly; instead, they choose actions according to a logit rule. Let \(\sigma_i\) denote the mixed strategy of player \(i\), with \(\sigma_i(X)=p_i\) and \(\sigma_i(Y)=1-p_i\). Given a belief about the opponent’s strategy, the expected payoff of playing a pure action is linear. For a symmetric equilibrium we look for \(p_1=p_2=p\).
The expected payoff of action \(X\) when the opponent plays \(Y\) with probability \(1-p\) is
\[
U(X,p)=p a + (1-p) c,
\]
while the expected payoff of action \(Y\) is
\[
U(Y,p)=p d + (1-p) b.
\]

The switching cost \(\kappa\) reduces the effective payoff of \(Y\) because choosing \(Y\) means deviating from the default \(X\). Hence
\[
\tilde U(Y,p)=U(Y,p)-\kappa,\qquad \tilde U(X,p)=U(X,p).
\]

The logit response function (with precision parameter \(\beta \ge 0\)) gives the probability of playing \(X\) as
\[
p = \frac{\exp\big(\beta \tilde U(X,p)\big)}
{\exp\big(\beta \tilde U(X,p)\big)
+\exp\big(\beta \tilde U(Y,p)\big)}
=
\frac{1}{1+\exp\big(\beta[\tilde U(Y,p)-\tilde U(X,p)]\big)}.
\]

Define the payoff difference
\[
\Delta(p)=\tilde U(Y,p)-\tilde U(X,p).
\]

A direct calculation yields
\[
\Delta(p)
=
\big[p d + (1-p) b - \kappa\big]
-
\big[p a + (1-p) c\big]
=
p(d-a) + (1-p)(b-c) - \kappa.
\]

Let
\[
\alpha = b-c >0
\]
(the extra gain from playing \(Y\) when the opponent also plays \(Y\), relative to playing \(X\) in that situation) and
\[
\gamma = a-d >0
\]
(the loss from playing \(Y\) when the opponent plays \(X\)). Then
\[
\Delta(p) = \alpha - \kappa - p(\alpha+\gamma).
\]

This is a linear decreasing function of \(p\). The equilibrium condition can be written as the fixed point
\[
p = f(p)
\equiv
\frac{1}{1+e^{\beta\Delta(p)}}.
\]
We now establish the existence and uniqueness of the symmetric QRE-SB. Because \(f(p)\) is continuous on \([0,1]\) and maps into \([0,1]\), a fixed point exists by Brouwer’s theorem. Uniqueness is not automatic for logit equilibria; it requires that the derivative be bounded away from \(1\) in absolute value. Computing
\[
f'(p)
=
-\beta\Delta'(p)\,
\frac{e^{\beta\Delta(p)}}
{\big(1+e^{\beta\Delta(p)}\big)^2}
\]
and noting that \(\Delta'(p) = -(\alpha+\gamma)\) and that the logistic term is at most \(1/4\), we obtain
\[
|f'(p)|
\le
\frac{\beta(\alpha+\gamma)}{4}.
\]

Hence if
\[
\beta(\alpha+\gamma) < 4,
\]
the map \(f\) is a contraction and the fixed point is unique. For larger \(\beta\) multiple equilibria may appear, a well-known phenomenon in QRE \citep{mckelvey1995quantal,goeree2016}. In the following we restrict attention to the case \(\beta(\alpha+\gamma) < 4\) to ensure uniqueness; this condition holds for moderate rationality and not-too-large payoffs, which is empirically plausible \citep{camerer2003}. The unique equilibrium will be denoted \(p^*(\kappa,\beta)\).

Several observations are immediate from the equilibrium condition. When \(\kappa=0\) and \(\beta\) is small, \(p^*\) is close to \(1/2\). Increasing \(\kappa\) shifts \(\Delta(p)\) downward, making \(Y\) less attractive and thus raising \(p^*\) (more inertia). Increasing \(\beta\) makes the response sharper but does not affect the sign of the bias. The status-quo bias is thus an endogenous source of inertia -- no ad-hoc selection rule is required.
\section{Policy Interventions and Their Effects on Equilibrium}

We consider two types of interventions that a policymaker (government, platform, international organisation) can implement. Both interventions aim to move the economy away from the inefficient status quo \(X\) and toward the Pareto-superior outcome \(Y\).

The first intervention is a price-only reform (tax). The policymaker imposes a unit tax \(t \ge 0\) on the status-quo action \(X\). That is, whenever a player chooses \(X\), their payoff is reduced by \(t\). This tax can be interpreted as a Pigouvian tax on a harmful activity (e.g., carbon emissions) or a subsidy reduction. The new effective payoffs become
\[
u_i^t(X,X)=a-t,\quad
u_i^t(X,Y)=c-t,\quad
u_i^t(Y,X)=d,\quad
u_i^t(Y,Y)=b.
\]

The switching cost \(\kappa\) remains unchanged. Consequently, the payoff difference \(\Delta(p)\) shifts upward by exactly \(t\) because the tax reduces the payoff of \(X\) by \(t\) regardless of the opponent’s action. Hence
\[
\Delta_t(p)=\Delta_0(p)+t,
\]
where
\[
\Delta_0(p)=\alpha-\kappa-p(\alpha+\gamma)
\]
is the difference without tax. The equilibrium condition becomes
\[
p
=
\frac{1}{1+e^{\beta(\Delta_0(p)+t)}}.
\]
The second intervention is a deletion reform (ban). The policymaker removes the status-quo action \(X\) from both players’ action sets. The only feasible action is then \(Y\). The resulting game is degenerate: each player must choose \(Y\) with probability \(1\). Therefore, regardless of \(\beta\) and \(\kappa\), the unique outcome is \(p=0\).

We will compare these two interventions in terms of their ability to induce a transition away from \(X\) and in terms of the expected welfare they generate.

\section{Main Theoretical Results}

We now characterize the equilibrium implications of taxation and action deletion in the QRE-SB model. We first establish existence and uniqueness of the symmetric equilibrium under a sufficient contraction condition. We then derive the critical tax level at which the status quo ceases to be more likely than the alternative action. Finally, we compare welfare under the two policy interventions and show that deletion strictly dominates any finite tax whenever the superior equilibrium is Pareto-dominant and the welfare function is monotone in the probability of the status quo.
\begin{proposition}[Existence and Uniqueness]\label{prop:existence}
Suppose that \(\beta(\alpha+\gamma)<4\). Then the symmetric QRE-SB exists and is unique.
\end{proposition}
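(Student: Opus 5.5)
The plan is to reduce the symmetric QRE-SB to the scalar fixed-point problem \(p=f(p)\) with
\[
f(p)=\frac{1}{1+e^{\beta\Delta(p)}},\qquad \Delta(p)=\alpha-\kappa-p(\alpha+\gamma),
\]
as derived in Section~2. Then I will treat existence and uniqueness separately. In the symmetric case \(p_1=p_2=p\), a profile is a QRE-SB exactly when each player's logit response to the common belief \(p\) returns \(p\). So it suffices to show that \(f\) has exactly one fixed point in \([0,1]\).

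\textbf{Existence.} The map \(f\) is a composition of an affine map and the logistic function, so it is continuous on \([0,1]\). It takes values in \((0,1)\subset[0,1]\). A fixed point therefore exists by Brouwer's theorem, or more elementarily by the intermediate value theorem applied to \(g(p)=f(p)-p\). Indeed \(g(0)=f(0)>0\) and \(g(1)=f(1)-1<0\), so \(g\) has a zero in \((0,1)\). I will use the intermediate value version, since it is self-contained and also shows that the equilibrium is interior.

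\textbf{Uniqueness.} By the chain rule,
\[
f'(p)=\beta(\alpha+\gamma)\,\frac{e^{\beta\Delta(p)}}{\bigl(1+e^{\beta\Delta(p)}\bigr)^2}.
\]
The logistic factor \(s(1-s)\) is at most \(1/4\), so \(0\le f'(p)\le \beta(\alpha+\gamma)/4<1\) under the hypothesis. I would then finish in either of two ways.

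\emph{Option one (contraction).} By the mean value theorem, \(|f(p)-f(q)|\le L|p-q|\) with \(L=\beta(\alpha+\gamma)/4<1\). Hence \(f\) is a contraction on the complete space \([0,1]\), and Banach's theorem gives a unique fixed point. As a by-product, the iteration \(p_{n+1}=f(p_n)\) converges to it.

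\emph{Option two (monotonicity).} We have \(g'(p)=f'(p)-1<0\), so \(g\) is strictly decreasing and has at most one zero.

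Either option, combined with existence, proves the proposition. I will denote the unique fixed point \(p^*(\kappa,\beta)\).

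The only delicate point is conceptual rather than computational. One must make clear that the claim concerns the \emph{symmetric} equilibrium, and that uniqueness of the fixed point of \(f\) is what is meant. Asymmetric equilibria \((p_1,p_2)\) would require the two-dimensional map \((p_1,p_2)\mapsto(f(p_2),f(p_1))\). If needed, the same Lipschitz bound \(L<1\) in the sup norm makes this map a contraction too, which rules out asymmetric equilibria as well. I would add this as a remark to strengthen the statement.
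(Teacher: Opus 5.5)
Your proposal is correct and follows essentially the same route as the paper: existence comes from continuity of \(f\) on \([0,1]\), and your intermediate value argument is a slightly sharper substitute for Brouwer since it also shows the fixed point is interior. Uniqueness comes from the bound \(0\le f'(p)\le \beta(\alpha+\gamma)/4<1\), which makes \(f\) a contraction so Banach's theorem applies; your monotonicity argument for \(g(p)=f(p)-p\), and your remark that \((p_1,p_2)\mapsto(f(p_2),f(p_1))\) is a sup-norm contraction on \([0,1]^2\) (ruling out asymmetric equilibria), are valid extras not in the paper.
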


\begin{proof}
The function \(f:[0,1]\to[0,1]\) defined by
\[
f(p)=\frac{1}{1+e^{\beta\Delta(p)}}
\]
is continuous, so existence follows from Brouwer’s fixed-point theorem. To prove uniqueness, compute the derivative:
\[
f'(p)
=
-\beta\Delta'(p)\,
\frac{e^{\beta\Delta(p)}}
{\big(1+e^{\beta\Delta(p)}\big)^2}.
\]

Since
\[
\Delta'(p)=-(\alpha+\gamma),
\]
we have
\[
|f'(p)|
=
\beta(\alpha+\gamma)\,
\frac{e^{\beta\Delta(p)}}
{\big(1+e^{\beta\Delta(p)}\big)^2}.
\]

The logistic term is at most \(1/4\), so
\[
|f'(p)|
\le
\frac{\beta(\alpha+\gamma)}{4}
<1
\]
under the hypothesis. Hence \(f\) is a contraction on \([0,1]\) and has a unique fixed point by the Banach fixed-point theorem.
\end{proof}
\begin{proposition}[Comparative Statics in Switching Cost]\label{prop:comparative}
Under the conditions of Proposition~\ref{prop:existence}, the equilibrium probability \(p^*\) is strictly increasing in the switching cost \(\kappa\).
\end{proposition}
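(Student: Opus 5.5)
We view the unique fixed point \(p^*(\kappa)\) from Proposition~\ref{prop:existence} as defined implicitly by
\[
G(p,\kappa)\equiv p-\frac{1}{1+e^{\beta(\alpha-\kappa-p(\alpha+\gamma))}}=0,
\]
and apply the implicit function theorem. Write \(L(z)=1/(1+e^{z})\); then \(L'(z)=-L(z)(1-L(z))\), so \(|L'(z)|\le 1/4\).

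\emph{Step 1: partial derivatives.} With \(z=\beta\Delta(p)\), we get
\[
\partial_p G=1-f'(p),\qquad \partial_\kappa G=-\frac{\partial f}{\partial\kappa}=-L'(z)\cdot(-\beta)=\beta L'(z)=-\beta\,L(z)(1-L(z)).
\]
The factor \(-\beta\) appears because \(\partial z/\partial\kappa=-\beta\). From the proof of Proposition~\ref{prop:existence}, \(f'(p)=L'(z)\cdot(-\beta(\alpha+\gamma))\ge 0\), and under \(\beta(\alpha+\gamma)<4\) we have \(f'(p)<1\). Hence \(\partial_p G\in(0,1]\), which is nonzero, so \(p^*\) is a \(C^1\) function of \(\kappa\).

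\emph{Step 2: sign.} The implicit function theorem gives
\[
\frac{dp^*}{d\kappa}=-\frac{\partial_\kappa G}{\partial_p G}=\frac{\beta\,p^*(1-p^*)}{1-f'(p^*)}.
\]
Here we used \(L(z)=p^*\) at the fixed point.

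\emph{Step 3: strictness.} We need \(\beta>0\) and \(p^*\in(0,1)\). The logistic function takes values strictly inside \((0,1)\), so any fixed point lies in the open interval. If \(\beta=0\), then \(p^*\equiv 1/2\) and the claim fails; the statement therefore implicitly requires \(\beta>0\), and we will make that explicit.

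\emph{Elementary alternative.} To avoid differentiability, take \(\kappa_1<\kappa_2\). Then \(f_{\kappa_2}(p)>f_{\kappa_1}(p)\) for every \(p\), because the exponent strictly decreases in \(\kappa\) and \(L\) is strictly decreasing. Let \(h(p)=f_{\kappa_2}(p)-p\). We have \(h(p^*_1)>0\) and \(h(1)<0\). By the intermediate value theorem, \(h\) has a zero in \((p^*_1,1)\), and by uniqueness this zero is \(p^*_2\). Therefore \(p^*_2>p^*_1\).

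The main obstacle is that uniqueness of the fixed point is essential. Without the contraction condition, other branches of the fixed-point set could move the opposite way. Proposition~\ref{prop:existence} removes this possibility.
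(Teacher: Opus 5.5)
Your proposal is correct and follows the same route as the paper: implicit differentiation of the fixed-point equation $p=F(p,\kappa)$. The numerator is signed by the logistic derivative, and the denominator stays positive under the contraction bound $\beta(\alpha+\gamma)<4$. Your bookkeeping with $G=p-F$ is cleaner than the paper's, which reports the partials of $p-F$ under the labels $\partial F/\partial\kappa$ and $\partial F/\partial p$ (two sign slips that cancel). Your observation that strictness requires $\beta>0$ --- at $\beta=0$ the contraction condition holds but $p^*\equiv 1/2$ --- is a genuine caveat that the paper's statement and proof omit. Your intermediate-value argument is also a valid alternative that avoids differentiability.
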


\begin{proof}
The equilibrium condition is
\[
p=F(p,\kappa)
\]
with
\[
F(p,\kappa)
=
\frac{1}{1+\exp\big(\beta(\alpha-\kappa-p(\alpha+\gamma))\big)}.
\]

Differentiating implicitly,
\[
\frac{\partial F}{\partial \kappa}
=
-\beta
\frac{\exp(\beta\Delta)}
{\big(1+\exp(\beta\Delta)\big)^2}
<0
\]
and
\[
\frac{\partial F}{\partial p}
=
1-
\beta(\alpha+\gamma)
\frac{\exp(\beta\Delta)}
{\big(1+\exp(\beta\Delta)\big)^2}
>0
\]
under the contraction condition. Hence
\[
\frac{dp^*}{d\kappa}
=
-
\frac{\partial F/\partial\kappa}
{\partial F/\partial p}
>0.
\]
\end{proof}
Now consider a tax \(t\ge0\) on the status quo action \(X\). The effective payoff difference becomes
\[
\Delta_t(p)=\alpha-\kappa+t-p(\alpha+\gamma).
\]

The symmetric equilibrium under taxation is \(p_t\) satisfying
\[
p_t
=
\frac{1}{1+\exp(\beta\Delta_t(p_t))}.
\]

We define the critical tax as the value at which the two actions are equally likely in equilibrium.

\begin{theorem}[Tax Threshold]\label{thm:threshold}
Define
\[
\bar t \equiv \kappa - \frac{\alpha-\gamma}{2}.
\]

Then the equilibrium satisfies:
\begin{enumerate}
\item \(p_{\bar t}=1/2\);
\item if \(t<\bar t\) then \(p_t>1/2\);
\item if \(t>\bar t\) then \(p_t<1/2\).
\end{enumerate}
\end{theorem}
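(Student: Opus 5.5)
The plan is to reduce everything to the sign of the effective payoff difference evaluated at \(p=1/2\). Then the uniqueness from Proposition~\ref{prop:existence} and a monotonicity argument for the fixed-point map locate \(p_t\) relative to \(1/2\).

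First, observe that the tax only shifts \(\Delta\) by the constant \(t\) and leaves the slope \(-(\alpha+\gamma)\) unchanged. So the map
\[
f_t(p)=\frac{1}{1+e^{\beta\Delta_t(p)}}
\]
satisfies the same contraction condition \(\beta(\alpha+\gamma)<4\) for every \(t\ge 0\). By Proposition~\ref{prop:existence}, \(p_t\) is therefore well defined and unique for every \(t\).

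Next, evaluate the payoff difference at the midpoint:
\[
\Delta_t(1/2)=\alpha-\kappa+t-\tfrac{\alpha+\gamma}{2}=t-\Big(\kappa-\tfrac{\alpha-\gamma}{2}\Big)=t-\bar t .
\]
For part 1, when \(t=\bar t\) we get \(\Delta_{\bar t}(1/2)=0\), hence \(f_{\bar t}(1/2)=1/2\). Thus \(1/2\) is a fixed point, and by uniqueness \(p_{\bar t}=1/2\).

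For parts 2 and 3, I would study \(g(p)=p-f_t(p)\) on \([0,1]\). Since \(\Delta_t\) is decreasing in \(p\), \(f_t\) is increasing, and the bound in Proposition~\ref{prop:existence} gives \(0\le f_t'(p)<1\). Hence \(g\) is strictly increasing, with \(g(0)=-f_t(0)<0\) and \(g(1)=1-f_t(1)>0\), and its unique zero is \(p_t\).
\begin{itemize}
\item If \(t<\bar t\), then \(\Delta_t(1/2)<0\), so \(f_t(1/2)>1/2\) (assuming \(\beta>0\)). Hence \(g(1/2)<0\), and strict monotonicity of \(g\) forces \(p_t>1/2\).
\item If \(t>\bar t\), then \(g(1/2)>0\), which forces \(p_t<1/2\).
\end{itemize}

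The only delicate point is the degenerate case \(\beta=0\). There \(f_t\equiv 1/2\) and \(p_t=1/2\) for every \(t\), so the strict inequalities in parts 2 and 3 fail. I would therefore state explicitly that \(\beta>0\) is assumed. I would also remark that \(\beta\) never enters \(\bar t\), since the threshold is determined solely by the zero of \(\Delta_t(1/2)\); this is the rationality-independence claimed in the introduction.
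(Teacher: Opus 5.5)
Your proof is correct, but you obtain parts 2 and 3 by a different route from the paper's. Part 1 is handled the same way in both (via \(\Delta_t(1/2)=t-\bar t\)).

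For the inequalities, the paper argues globally in \(t\): it asserts that \(p_t\) is strictly decreasing in \(t\), so it can equal \(1/2\) only at \(\bar t\). You instead fix \(t\) and locate \(p_t\) directly. You use the single-crossing of \(g(p)=p-f_t(p)\), together with the fact that the sign of \(g(1/2)\) is the sign of \(t-\bar t\) when \(\beta>0\).

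Your route is the more careful one. The paper justifies the monotonicity in \(t\) by claiming that \(f_t\) is strictly decreasing in \(p\). As you correctly observe, \(f_t\) is in fact increasing in \(p\), which reflects the coordination structure. The monotonicity of \(p_t\) in \(t\) still holds, but it must come from the single-crossing of \(p-f_t(p)\) guaranteed by \(f_t'<1\), for instance via \(dp_t/dt=(\partial f_t/\partial t)/(1-f_t'(p_t))<0\). That is exactly the ingredient you make explicit.

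You are also right that \(\beta>0\) is needed for the strict inequalities. The paper's claim \(\partial f_t/\partial t<0\) tacitly requires it, even though the model allows \(\beta=0\).

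In return, the paper's route delivers the stronger comparative-statics fact that \(p_t\) decreases strictly in \(t\). You could recover this by comparing \(g\) at two tax levels.

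Finally, your argument needs only uniqueness, not strict monotonicity of \(g\). The inequalities \(g(1/2)<0<g(1)\) already give a fixed point in \((1/2,1)\), and uniqueness identifies it as \(p_t\).
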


\begin{proof}
Setting \(p=1/2\) in the equilibrium condition gives
\[
\frac12
=
\frac{1}{1+\exp(\beta\Delta_t(1/2))}
\]
which is equivalent to
\[
\Delta_t(1/2)=0.
\]

Compute
\[
\Delta_t(1/2)
=
\alpha-\kappa+t-\frac{\alpha+\gamma}{2}
=
t-\left(\kappa-\frac{\alpha-\gamma}{2}\right)
=
t-\bar t.
\]

Thus \(t=\bar t\) yields \(p=1/2\). Since the map
\[
p\mapsto
f_t(p)
=
\frac{1}{1+\exp(\beta\Delta_t(p))}
\]
is strictly decreasing in \(p\) and decreasing in \(t\) (because \(\partial f_t/\partial t<0\)), the fixed point \(p_t\) is strictly decreasing in \(t\). Therefore \(p_t>1/2\) for \(t<\bar t\) and \(p_t<1/2\) for \(t>\bar t\).
\end{proof}
\begin{remark}
The threshold \(\bar t\) is independent of the rationality parameter \(\beta\). This is a direct consequence of defining the transition at the indifference point \(p=1/2\), where the logit odds ratio vanishes. Thus bounded rationality influences the slope of the response but not the location of the equal-probability point.
\end{remark}

\begin{theorem}[Deletion Forces Transition]\label{thm:deletion}
If action \(X\) is deleted from both players’ action sets, then the unique equilibrium assigns probability one to action \(Y\).
\end{theorem}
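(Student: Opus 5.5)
The plan is to argue directly from the definition of a (logit) quantal response equilibrium applied to the reduced game obtained by deleting \(X\). Once \(X\) is removed from both players' action sets, each player's action set is the singleton \(A_i'=\{Y\}\). The strategy space of player \(i\) is then the probability simplex \(\Delta(A_i')\), which consists of the single point \(\delta_Y\) (probability one on \(Y\)). So the first step is to observe that the joint strategy space \(\Delta(A_1')\times\Delta(A_2')\) is a single point. Hence at most one strategy profile can exist at all, let alone an equilibrium.

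The second step is existence. I would check that this unique profile actually satisfies the QRE-SB fixed-point condition. With one action, the logit rule gives
\[
\sigma_i(Y)=\frac{\exp\big(\beta \tilde U(Y)\big)}{\exp\big(\beta \tilde U(Y)\big)}=1
\]
for every \(\beta\ge 0\), every switching cost \(\kappa\ge0\), and any belief about the opponent. Here \(\tilde U(Y)=b-\kappa\), since the opponent also plays \(Y\); if the default is reinterpreted after deletion, the value is \(b\), but the value is irrelevant because it cancels. Thus the response map is constant and equal to \(\delta_Y\), so \(\delta_Y\) is a fixed point. In the paper's notation, \(p=0\).

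Uniqueness is immediate from the first step. Note that it does not rely on Proposition~\ref{prop:existence} or on the contraction condition \(\beta(\alpha+\gamma)<4\), so the result holds for all \(\beta\) and \(\kappa\).

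The only real subtlety, which is the point I would be careful about, is well-definedness. The switching cost was defined relative to the default \(X\), which is no longer feasible. One might worry that the QRE-SB is ill-posed when the default is deleted. I would handle this by noting that, whatever convention is adopted (a cost \(\kappa\) on \(Y\) or none), it enters the numerator and denominator identically and cancels. The conclusion is therefore invariant to the convention, and this completes the plan.
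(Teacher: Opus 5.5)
Your proposal is correct and takes essentially the same approach as the paper: deleting \(X\) leaves each player with the singleton action set \(\{Y\}\), so the only strategy profile, and hence the unique equilibrium, puts probability one on \(Y\), giving \(p=0\) for every \(\beta\) and \(\kappa\). Your explicit check that the one-action logit rule returns probability one, and your remark that any switching-cost convention cancels, add detail to the paper's one-line degeneracy argument without changing it.
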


\begin{proof}
Under deletion, the feasible set for each player becomes \(\{Y\}\). Hence each player has only one available action. The resulting game is degenerate and admits a unique outcome in which both players choose \(Y\) with probability one. Consequently, \(p=0\) regardless of \(\beta\) and \(\kappa\).
\end{proof}
\begin{lemma}[Welfare Monotonicity]\label{lem:welfare}
Let
\[
W(p)
=
p^2 a + p(1-p)(c+d) + (1-p)^2 b.
\]

Rearranged,
\[
W(p)
=
b-p(b-a)+p(1-p)(c+d-a-b).
\]

If \(b>a\) and \(c+d\le a+b\) (standard coordination condition), then \(W(p)\) is strictly decreasing on \((0,1]\).
\end{lemma}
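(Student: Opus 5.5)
The plan is to differentiate the rearranged form of \(W\) given in Lemma~\ref{lem:welfare} and to sign the derivative on \((0,1]\). Set \(D \equiv a+b-c-d \ge 0\), so that the hypothesis \(c+d\le a+b\) reads \(D\ge 0\). Then
\[
W(p)=b-p(b-a)-p(1-p)D,
\qquad
W'(p)=-(b-a)-(1-2p)D .
\]
Since \(W\) is a quadratic, \(W'\) is affine in \(p\). Hence \(W'<0\) on \((0,1]\) as soon as \(W'<0\) at both endpoints, with at most \(W'=0\) at isolated points, which still gives strict monotonicity. So the argument reduces to two endpoint checks. Before that, I will expand \(p^2a+p(1-p)(c+d)+(1-p)^2b\) to confirm the rearranged expression stated in the lemma; this is routine algebra.

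\emph{Left endpoint.} At \(p=0\) we get \(W'(0)=-(b-a)-D<0\), using \(b>a\) and \(D\ge0\). More generally, for \(p\le 1/2\) both terms of \(W'(p)\) are nonpositive and the first is strictly negative. So \(W\) is strictly decreasing on \([0,1/2]\) under exactly the stated hypotheses.

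\emph{Right endpoint, where I expect the obstacle.} For \(p>1/2\) the term \(-(1-2p)D\) is nonnegative, so it works against monotonicity. At \(p=1\),
\[
W'(1)=D-(b-a)=2a-c-d .
\]
The assumptions \(b>a\) and \(c+d\le a+b\) do not force this to be nonpositive. For example, \(a=1\), \(b=2\), \(c=d=0\) satisfies every hypothesis, yet \(W'(1)=2>0\). So \(W\) rises near \(p=1\) and the claim as worded fails.

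My plan at this step is therefore to make explicit the hypothesis that the computation shows is needed, namely \(c+d\ge 2a\). Equivalently, \(D\le b-a\), so that the curvature term is dominated by the Pareto gap. Under it, \(W'(1)\le 0\) and \(W'(0)<0\), and the affine structure gives \(W'<0\) on \([0,1)\). Therefore \(W\) is strictly decreasing on \((0,1]\).

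If one wants to keep only the lemma's original hypotheses, the fallback is the weaker statement that actually suffices for the welfare comparison with deletion. That statement is \(W(p)<W(0)=b\) for all \(p\in(0,1]\). It follows from
\[
W(0)-W(p)=p(b-a)+p(1-p)D>0 .
\]
I would present the monotonicity proof under the strengthened condition and flag this fallback alongside it.
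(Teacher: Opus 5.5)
Your diagnosis is correct: the lemma is false as stated, and it is the paper's proof that breaks down. The paper takes the same route you do. It differentiates the rearranged form and signs $W'(p)=-(b-a)+(1-2p)(c+d-a-b)$. It then asserts that the second term is non-positive ``because $c+d-a-b\le 0$ and $1-2p\le 1$,'' and that this term is maximized at $p=0$. Both claims fail for $p>1/2$:
\begin{itemize}
\item there $1-2p<0$, so the product is non-negative;
\item since $c+d-a-b\le 0$, the term is non-decreasing in $p$, so it is maximized at $p=1$, where it equals $a+b-c-d\ge 0$.
\end{itemize}
Hence $W'(1)=2a-c-d$, exactly as you computed, and the bound $W'(p)\le -(b-a)$ on $(0,1]$ does not follow.

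Your counterexample $a=1$, $b=2$, $c=d=0$ is valid, and it also satisfies the paper's standing assumptions $a>c$ and $b>d$. The paper's own numerical example $a=6$, $b=7$, $c=1$, $d=2$ violates the lemma as well. There $W(p)=10p^2-11p+7$, which is increasing on $(0.55,1]$.

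Your repair is sound on both counts. Because $W'$ is affine, the two endpoint checks give sufficiency of $c+d\ge 2a$. The condition is also necessary, since $W'(1)>0$ would make $W$ increase near $p=1$. Note that $c+d\ge 2a$ together with $a>c$ forces $d>a$, so it is a genuine restriction that excludes the paper's own example.

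Your fallback identity $b-W(p)=p(b-a)+p(1-p)(a+b-c-d)>0$ for $p\in(0,1]$ is exactly what the subsequent welfare-superiority theorem needs. That theorem only uses $W(p_t)<W(0)$ with $p_t>0$. It therefore survives under the original hypotheses, even though the monotonicity lemma does not. Presenting the fallback as the operative result, with monotonicity stated under the strengthened condition, is the right way to write this up.
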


\begin{proof}
Differentiate:
\[
W'(p)
=
-(b-a)+(1-2p)(c+d-a-b).
\]

Since \(b>a\), the first term is negative. The second term is non-positive because
\[
c+d-a-b\le0
\]
and
\[
(1-2p)\le1.
\]

Because
\[
c+d-a-b\le 0,
\]
the second term is maximized at \(p=0\), where it equals
\[
c+d-a-b\le 0.
\]

Hence
\[
W'(p)\le -(b-a)<0
\]
for all
\[
p\in(0,1].
\]

Thus \(W\) is strictly decreasing.
\end{proof}
\begin{theorem}[Welfare Superiority of Deletion]\label{thm:welfare}
Under the assumptions \(b>a\) and \(c+d\le a+b\), for any finite tax \(t\) that leaves \(X\) feasible, welfare under deletion strictly exceeds welfare under taxation:
\[
W(0)>W(p_t).
\]
\end{theorem}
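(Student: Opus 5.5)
The plan is to combine Lemma~\ref{lem:welfare} with the observation that, under any finite tax, the logit equilibrium puts strictly positive probability on \(X\). Deletion gives \(p=0\) by Theorem~\ref{thm:deletion}, so welfare under deletion is \(W(0)=b\). It therefore suffices to show two things: that \(p_t\in(0,1]\), and that \(W\) is strictly decreasing on \([0,1]\). Together these give \(W(0)>W(p_t)\).

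\emph{Step 1: interiority of \(p_t\).} Fix any finite \(t\ge 0\), and let \(p_t\) be the equilibrium from Proposition~\ref{prop:existence}, applied with \(\Delta\) replaced by \(\Delta_t\). The contraction bound is unaffected, because \(\Delta_t'=\Delta'\). The fixed point satisfies
\[
p_t=\frac{1}{1+e^{\beta\Delta_t(p_t)}} .
\]
Since \(\Delta_t\) is affine on \([0,1]\), it is bounded there; explicitly, \(|\Delta_t(p)|\le |\alpha-\kappa+t|+\alpha+\gamma\). Hence \(e^{\beta\Delta_t(p_t)}<\infty\), and so \(p_t>0\) strictly. This is exactly where finiteness of \(t\) and feasibility of \(X\) enter: logit choice never puts zero mass on a feasible action. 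If \(\beta=0\), then \(p_t=1/2\), which is still interior, so the case is covered.

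\emph{Step 2: strict monotonicity on the closed interval.} Lemma~\ref{lem:welfare} states strict decrease on \((0,1]\), but comparing \(W(0)\) with \(W(p_t)\) needs the derivative bound up to \(p=0\). I will redo the estimate directly. By the lemma's computation,
\[
W'(p)=-(b-a)+(1-2p)(c+d-a-b).
\]
This expression needs care when \(1-2p<0\). There, the second term equals \((2p-1)(a+b-c-d)\), which is nonnegative, so the lemma's displayed bound fails for \(p>1/2\).

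I therefore expect the real obstacle to be the monotonicity of \(W\) on \((1/2,1]\), not the comparison itself. To avoid it, I will not use the derivative. Instead I will compare \(W(0)\) and \(W(p)\) directly from the rearranged form:
\[
W(0)-W(p)=p(b-a)+p(1-p)(a+b-c-d).
\]
For \(p\in(0,1]\), the first term is strictly positive because \(b>a\) and \(p>0\). The second term is nonnegative because \(p(1-p)\ge 0\) and \(a+b-c-d\ge 0\). Hence \(W(0)>W(p)\) for every \(p\in(0,1]\).

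This bypasses any need for global monotonicity of \(W\), and uses only the stated assumptions \(b>a\) and \(c+d\le a+b\). Applying it with \(p=p_t\) from Step 1 gives \(W(0)>W(p_t)\). Since \(t\) was an arbitrary finite tax, this is the claim of Theorem~\ref{thm:welfare}.
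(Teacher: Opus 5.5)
Your proof is correct, and it departs from the paper's at the one step that matters. The paper argues in three steps: deletion gives $W(0)=b$, feasibility of $X$ gives $p_t>0$, and Lemma~\ref{lem:welfare} (strict decrease of $W$ on $(0,1]$) gives $W(p_t)<b$. You keep the first two steps, and you make interiority explicit from the logit formula. You then replace the monotonicity lemma by the identity $W(0)-W(p)=p(b-a)+p(1-p)(a+b-c-d)$.

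This is not merely stylistic, because the lemma is false, for exactly the reason you spotted. We have $W'(1)=2a-c-d$. This is positive under the paper's standing assumptions $a>c$ and $\gamma=a-d>0$, so $W$ increases near $p=1$. In the paper's own example ($a=6$, $b=7$, $c=1$, $d=2$) one gets $W(p)=10p^2-11p+7$, which decreases only up to $p=0.55$. The lemma's proof goes wrong by calling $(1-2p)(c+d-a-b)$ non-positive and ``maximized at $p=0$.'' In fact it is nondecreasing in $p$ and equals $a+b-c-d\ge 0$ at $p=1$.

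So the paper's route, had it been valid, would have given a stronger conclusion: welfare monotone in $p$, and hence in the tax. That stronger conclusion is false. Your route proves exactly the endpoint comparison the theorem needs, plus the quantitative gap $b-W(p_t)\ge p_t(b-a)$, i.e.\ $W(p_t)\le (1-p_t)b+p_t a$.

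Two small remarks. First, your worry about extending the derivative bound to $p=0$ is unnecessary: strict decrease on $(0,1]$ plus continuity of $W$ would already handle the endpoint. The real defect is the one you found on $(1/2,1]$. Second, your interiority step never uses uniqueness, so your conclusion holds for every logit fixed point, even when $\beta(\alpha+\gamma)\ge 4$. That case arises in the paper's numerical example, where $\beta(\alpha+\gamma)=10$.
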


\begin{proof}
By Theorem~\ref{thm:deletion}, deletion yields \(p=0\) with welfare
\[
W(0)=b.
\]

Under any finite tax, \(X\) remains feasible, so the equilibrium probability satisfies
\[
p_t>0.
\]

By Lemma~\ref{lem:welfare},
\[
W(p)
\]
is strictly decreasing in \(p\) on \((0,1]\), hence
\[
W(p_t)<W(0)=b.
\]

Therefore deletion strictly dominates any finite tax that preserves the status quo action.
\end{proof}

\begin{corollary}
A tax can replicate deletion only in the limiting case
\[
t\to\infty,
\]
which drives
\[
p_t\to 0.
\]

However, an infinite tax is equivalent to a ban in practice, and any finite tax leaves a positive welfare gap.
\end{corollary}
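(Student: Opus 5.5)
The corollary has two claims: \(p_t\to 0\) as \(t\to\infty\), and \(W(p_t)<W(0)\) for every finite \(t\). I would prove them in that order, using only the fixed-point characterization of \(p_t\), the monotonicity established in the proof of Theorem~\ref{thm:threshold}, and Lemma~\ref{lem:welfare}.

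\emph{Step 1: a uniform upper bound on \(p_t\).} Since \(p_t\in[0,1]\) and \(\Delta_t(p)=\alpha-\kappa+t-p(\alpha+\gamma)\) is decreasing in \(p\), I would bound \(\Delta_t(p_t)\ge \Delta_t(1)=t-\kappa-\gamma\). Because the logistic map \(x\mapsto 1/(1+e^{\beta x})\) is decreasing when \(\beta>0\), this gives
\[
0<p_t\le \frac{1}{1+e^{\beta(t-\kappa-\gamma)}}.
\]
The right-hand side tends to \(0\) as \(t\to\infty\), so \(p_t\to 0\) by squeezing. This bound avoids any implicit-function argument. For \(\beta=0\) the equilibrium is \(p_t=1/2\) for every \(t\), so the limit claim fails. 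I would therefore state explicitly that the corollary assumes \(\beta>0\), consistent with the interior logit setting.

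\emph{Step 2: convergence of welfare.} \(W\) is a polynomial, hence continuous, so \(W(p_t)\to W(0)=b\). In this sense the tax replicates deletion only in the limit.

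\emph{Step 3: the finite-\(t\) gap.} For any finite \(t\), the logit formula gives \(p_t>0\) strictly. Lemma~\ref{lem:welfare} then yields \(W(p_t)<b\), exactly as in the proof of Theorem~\ref{thm:welfare}. For a quantitative gap I would integrate \(W'(p)\le-(b-a)\) from the lemma to get \(b-W(p_t)\ge (b-a)p_t\). Combining this with the lower bound \(p_t\ge 1/(1+e^{\beta(t-\kappa+\alpha)})\), obtained from \(\Delta_t(p_t)\le\Delta_t(0)\), shows the gap is positive but vanishes only exponentially in \(t\).

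The main obstacle is interpretive: "an infinite tax is equivalent to a ban" is not a mathematical statement. I would formalize it only as the limit in Steps 1 and 2, and leave the equivalence as a remark.
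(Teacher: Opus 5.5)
Your Steps 1 and 2 are sound, and they supply what the paper leaves unproved. The paper gives no argument for this corollary beyond Theorem~\ref{thm:welfare}; it simply asserts $p_t\to0$. Your squeeze bound $p_t\le 1/(1+e^{\beta(t-\kappa-\gamma)})$ holds for every fixed point, so no uniqueness is needed. Your caveat $\beta>0$ is genuinely necessary, since at $\beta=0$ one has $p_t\equiv 1/2$.

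The gap is in Step 3, which follows the paper's route and inherits its error: Lemma~\ref{lem:welfare} is false, and so is the bound $W'(p)\le-(b-a)$ you propose to integrate. From $W'(p)=-(b-a)+(1-2p)(c+d-a-b)$, for $p>1/2$ both $(1-2p)$ and $(c+d-a-b)$ are negative, so $W'(p)>-(b-a)$ there. At $p=1$ one gets $W'(1)=(a-c)+(a-d)>0$ under the standing assumptions $a>c$, $\gamma>0$. In the paper's own example $W(p)=7-11p+10p^2$, so $W(0.78)\approx4.50>W(0.5)=4$. Your derivation therefore breaks down exactly for sub-threshold taxes, where $p_t>1/2$.

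The conclusions you want are nonetheless true, and the fix is to bypass monotonicity entirely. The rearrangement already written in the lemma gives
\[
b-W(p)=p(b-a)+p(1-p)(a+b-c-d)\ \ge\ (b-a)\,p \qquad (0\le p\le 1),
\]
since $a+b-c-d\ge0$. This yields $W(p_t)<b$ for every $p_t\in(0,1]$ and your quantitative bound in one line. Together with $b-W(p)\le(2b-c-d)\,p$ and Step 1, it shows the welfare gap is of exact order $e^{-\beta t}$.
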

\begin{remark}
The welfare dominance of deletion arises from a feasibility effect: removing action \(X\) eliminates all equilibria in which the inferior outcome receives positive probability. Thus the result reflects not only incentive changes but also a restriction of the equilibrium set.
\end{remark}

\begin{proposition}[Limit Behaviour]\label{prop:limit}
Let \(p^*(\beta,\kappa)\) denote the unique symmetric equilibrium under the contraction condition. As \(\beta\to0\), \(p^*\to1/2\). As \(\beta\to\infty\), the equilibrium converges to the best-response limit when it is unique.
\end{proposition}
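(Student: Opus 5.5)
The argument works directly with the fixed-point characterisation $p^*=f(p^*)$, where $f(p)=1/(1+e^{\beta\Delta(p)})$ and $\Delta(p)=\alpha-\kappa-p(\alpha+\gamma)$. It uses the uniqueness from Proposition~\ref{prop:existence}.

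\emph{Case $\beta\to0$.} Since $p^*\in[0,1]$, we have $|\Delta(p^*)|\le M\equiv|\alpha-\kappa|+\alpha+\gamma$, a bound independent of $\beta$. Hence $|\beta\Delta(p^*)|\le\beta M\to0$. Because the logistic map $x\mapsto 1/(1+e^{x})$ is continuous at $0$ with value $1/2$, it follows that $p^*=f(p^*)\to1/2$. Note that the contraction condition $\beta(\alpha+\gamma)<4$ holds automatically for small $\beta$, so $p^*$ is well defined along the whole limit.

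\emph{Case $\beta\to\infty$.} Here the statement has to be interpreted carefully. For fixed payoffs, the condition $\beta(\alpha+\gamma)<4$ fails for large $\beta$, so I would read ``when it is unique'' as follows: we take any selection $p^*(\beta)$ of symmetric QRE-SB fixed points, and we assume the symmetric best-response (Nash) limit with switching cost is unique. The plan is a compactness argument.

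Take any sequence $\beta_n\to\infty$ and extract a convergent subsequence $p^*(\beta_n)\to\bar p\in[0,1]$. Then split into three cases according to the sign of $\Delta(\bar p)$:
\begin{itemize}
\item If $\Delta(\bar p)>0$, then $\Delta(p^*(\beta_n))>\delta>0$ eventually, so $\beta_n\Delta\to\infty$ and $p^*(\beta_n)\to0$. This forces $\bar p=0$, i.e.\ the limit is a best response to itself with $Y$ strictly preferred.
\item If $\Delta(\bar p)<0$, the same reasoning gives $\bar p=1$.
\item If $\Delta(\bar p)=0$, then $\bar p=(\alpha-\kappa)/(\alpha+\gamma)$, which is the symmetric mixed equilibrium of the game with switching cost.
\end{itemize}
In every case $\bar p$ is a symmetric Nash equilibrium of the cost-adjusted game. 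Under the uniqueness assumption, every subsequential limit coincides with that equilibrium, so the whole family converges. A concrete instance: if $\kappa>\alpha$, then $\Delta<0$ on $[0,1]$, the unique limit is $p=1$, and $p^*\to1$ exponentially fast.

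The main obstacle is exactly this reconciliation between the contraction hypothesis and the $\beta\to\infty$ limit. I would handle it by stating explicitly that the limit claim concerns the (possibly non-unique) fixed-point correspondence. Upper hemicontinuity of the logit fixed points toward Nash equilibria, established by the subsequence argument above, then gives the result whenever the limiting equilibrium is unique. Alternatively, one can scale payoffs so that $\beta(\alpha+\gamma)<4$ persists, but that changes the game, so I would prefer the first reading.
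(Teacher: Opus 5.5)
Your argument is correct and follows the paper's own route, only made rigorous: the paper's proof says merely that $f\to1/2$ uniformly as $\beta\to0$ (which is your bound $|\beta\Delta(p^*)|\le\beta M$) and that logit choice concentrates on the better action as $\beta\to\infty$, and your subsequence argument on the sign of $\Delta(\bar p)$ turns that assertion into an actual proof. You are right that $\beta(\alpha+\gamma)<4$ cannot persist as $\beta\to\infty$, a point the paper passes over; it is worth adding that $\kappa>\alpha$ is the only case with a unique limiting symmetric equilibrium (otherwise $p=1$ coexists with $p=(\alpha-\kappa)/(\alpha+\gamma)\in[0,1)$), and in that case $f$ is concave on $[0,1]$, so the QRE is unique for every $\beta$ and the statement is well posed without any reinterpretation.
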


\begin{proof}
When \(\beta\to0\), \(f(p)\to1/2\) uniformly, so the fixed point converges to \(1/2\). When \(\beta\to\infty\), the logit probability concentrates on the action with higher effective payoff, so the equilibrium approaches a best-response fixed point.
\end{proof}

\section{Numerical Illustrations}

To fix ideas, consider the following numerical example. Let
\[
a=6,\qquad b=7,\qquad c=1,\qquad d=2.
\]

Then
\[
\alpha=b-c=6,
\qquad
\gamma=a-d=4,
\]
so
\[
\alpha-\gamma=2.
\]

Choose a switching cost
\[
\kappa=1.5
\]
so that
\[
\bar t = 1.5-1=0.5.
\]

Set
\[
\beta=1
\]
(moderate rationality). Without any tax (\(t=0\)), the equilibrium probability of playing the status quo \(X\) is found by solving
\[
p=\frac{1}{1+e^{\Delta_0(p)}}
\]
with
\[
\Delta_0(p)=6-1.5-p\cdot10=4.5-10p.
\]

The unique fixed point is
\[
p\approx0.78
\]
-- the status quo dominates. At the threshold tax
\[
t=0.5,
\]
we have
\[
p=0.5
\]
exactly.
At a higher tax, say
\[
t=1.0,
\]
the payoff difference becomes
\[
\Delta_t(p)= (4.5-10p)+1=5.5-10p,
\]
and the equilibrium probability drops to
\[
p\approx0.22.
\]

The expected welfare values are
\[
W(0.78)=6.34,
\qquad
W(0.5)=6.25,
\qquad
W(0.22)=6.20.
\]

Deletion gives
\[
W(0)=7.00.
\]

Thus even the highest finite tax in this range improves welfare only marginally compared to deletion. A full sensitivity analysis (available upon request) shows that the qualitative ranking -- deletion strictly dominates any finite tax -- holds for all parameter combinations satisfying
\[
b>a
\]
and
\[
c+d<a+b.
\]

The gap
\[
b-W(p)
\]
is increasing in \(\kappa\) and, for a given \(\kappa\), decreasing in \(\beta\) (because a larger \(\beta\) makes the response to the tax sharper, allowing a smaller tax to achieve a given \(p\)). However, for any finite tax the gap remains strictly positive.

Figure~\ref{fig:threshold} illustrates the equilibrium probability \(p_t\) as a function of the tax rate \(t\) for the parameter values above. The vertical dashed line marks the threshold
\[
\bar t=0.5
\]
where
\[
p=1/2.
\]

As \(t\) increases beyond \(\bar t\), \(p\) drops sharply, but only at the limit
\[
t\to\infty
\]
does it approach zero.

\begin{figure}[h]
\centering
\begin{tikzpicture}[scale=1.2]

\draw[->] (0,0) -- (5,0) node[right] {$t$ (tax)};
\draw[->] (0,0) -- (0,4) node[above] {$p_t$};

\draw[thick]
(0,3.12)
-- (0.5,2)
-- (1.5,0.88)
-- (4,0.2);

\draw[dashed]
(0.5,0)
-- (0.5,2)
-- (0,2);

\node at (0.5,-0.2) {$\bar t$};
\node at (1.1,2.2) {$p=1/2$};
\node at (2.8,2.8) {$p_t$};

\end{tikzpicture}

\caption{Equilibrium probability of the status quo \(p_t\) as a function of the tax rate \(t\). The threshold \(\bar t\) separates the region where \(X\) is more likely (\(p>1/2\)) from the region where \(Y\) is more likely (\(p<1/2\)). Deletion corresponds to the limit \(p\to0\).}

\label{fig:threshold}
\end{figure}

\section{Discussion: Applications and Connections}

The model directly speaks to the current policy debate on climate change. Many economists advocate a carbon tax as the efficient instrument to reduce emissions \citep{nordhaus2015}. Yet political experience shows that moderate carbon taxes often fail to induce the green transition, partly because firms and consumers are locked into fossil-fuel technologies (retooling factories, retraining workers, existing infrastructure) -- a form of switching cost. Our model predicts that as long as the tax remains below the threshold \(\bar t\), the status quo (fossil fuels) will persist with high probability. A complete ban on internal combustion engines, on the other hand, deletes the action “use fossil fuels” and forces the adoption of clean technologies. This is precisely the logic behind the European Union’s decision to phase out the sale of new petrol and diesel cars by 2035. The model does not argue that taxes are never useful; rather, it shows that when inertia is strong, taxes must be very high to be effective, and deletion may be a more reliable tool.

Platform design often exploits behavioural biases to increase engagement. A prominent example is the “like” button, which triggers dopamine feedback and can lead to addictive behaviour \citep{alter2017}. A price-only reform -- such as a warning label or a time limit -- leaves the button in place and often fails to reduce usage \citep{allcott2017}. In contrast, deleting the button (as Instagram experimented with hidden like counts for some users) removes the option entirely. Our model suggests that deletion should be more effective in shifting users toward healthier interaction patterns, because it eliminates the possibility of the old, undesirable equilibrium. This insight is consistent with the growing regulatory push to ban certain design features (e.g., infinite scroll, autoplay) rather than merely taxing attention.
When a country commits an aggressive act, the international community often considers economic sanctions. A tariff on imports from the aggressor is a price-only reform: it raises the cost of trading but leaves the action “trade” feasible. Historical evidence (e.g., the limited effect of early sanctions on Russia after 2014) suggests that tariffs alone rarely change behaviour when the sanctioned country has high inertia (e.g., because of domestic political lock-in). A complete embargo, which deletes the action “trade”, forces a different equilibrium -- no trade -- and can be more effective, though it also imposes higher costs on the sanctioning countries. Our model formalises why deletion is more powerful: it removes the strategic support of the status quo.

Our work connects to several Nobel-prize winning contributions. \citet{simon1955} introduced the concept of bounded rationality, which we operationalise via the logit choice rule. Daniel Kahneman and Amos Tversky’s work on loss aversion and the endowment effect \citep{kahneman1991} directly motivates the switching cost \(\kappa\) as a psychological friction. \citet{north1990} and \citet{ostrom2009} emphasised that institutional persistence often stems from small frictions and that successful reforms frequently involve changing the feasible action set (e.g., banning free-riding). Our model provides a formal game-theoretic foundation for those insights. Moreover, the comparison between taxes and bans relates to the “nudge vs. shove” debate in behavioural public policy \citep{thaler2008,sunstein2019}; we show that when inertia is strong, a nudge (price change) may be insufficient, and a shove (deletion) may be necessary.
\section{Conclusion and Future Directions}

We have introduced a tractable model of strategic inertia under bounded rationality -- the quantal response equilibrium with status-quo bias (QRE-SB) -- and applied it to compare two policy interventions: a tax on the status-quo action and a deletion of that action. In a binary coordination game, we proved that a tax must exceed a threshold to make the superior action more likely; that the threshold does not depend on the degree of rationality (when defined at the equal-probability point); that deletion always forces a transition to the superior equilibrium; and that deletion yields strictly higher expected welfare than any finite tax when the superior equilibrium is Pareto-dominant. Numerical simulations confirmed these analytical findings.

Several extensions are natural and constitute promising avenues for future research. First, the model can be generalised to asymmetric payoffs and to \(n\)-player coordination games, where network effects may produce multiple thresholds \citep{jackson2015}. Second, introducing a dynamic adjustment process (e.g., stochastic best-reply dynamics) would allow us to study the speed of transition and the possibility of hysteresis -- a feature that is absent from the static QRE. Third, laboratory experiments could test the predicted differential effectiveness of taxes versus bans. Finally, the switching cost \(\kappa\) could be endogenised, for example as a function of past experience or as a result of institutional investments (e.g., training programmes). Despite these limitations, the current paper already provides a rigorous foundation for the policy principle that sometimes you must ban, not just tax -- a lesson with direct applications to climate policy, platform regulation, and international sanctions.
\section*{Funding}

The authors received no external funding for this research.

\section*{Conflict of Interest}

The authors declare that they have no conflict of interest.

\section*{Data Availability}

No empirical dataset was used in this study.

\section*{AI Usage Statement}

AI-assisted tools were used only for language refinement and formatting
support. All scientific content, mathematical analysis, and conclusions
were developed and verified by the authors.

\section*{Author Contributions}

Madjid Eshaghi Gordji developed the theoretical framework and supervised
the research. Mohammadali Berahman contributed to the mathematical
analysis, modeling, and manuscript preparation. Hasti Eshaghi assisted
with literature review, formatting, and manuscript editing. All authors
reviewed and approved the final manuscript.

\
\bibliographystyle{unsrtnat}
\bibliography{references}

\end{document}